\newtheorem{theorem}{Theorem}
\DeclareMathOperator*{\argmin}{arg\,min}
\newcommand{\strategy}{\omega}
\renewcommand{\wp}{
\text{w}}
\newcommand{\wpSet}{W}
\newcommand{\costMDP}{C}
\newcommand{\tol}{\varepsilon}
\newcommand{\reals}{\mathbb{R}}
\newcommand{\naturals}{\mathbb{N}}
\newcommand{\beliefs}{\mathbb{B}}
\newcommand{\I}{\mathcal{I}}
\renewcommand{\P}{\mathcal{P}}
\newcommand{\prob}{\mathrm{P}}
\newcommand{\coll}{\text{coll}}
\newcommand{\tar}{\text{tar}}
\title{\LARGE \bf Pareto Optimal Strategies for Event Triggered Estimation}
\author{Anne Theurkauf, Nisar Ahmed, and Morteza Lahijanian
\thanks{Authors are with the Smead Department of Aerospace Engineering Sciences, University of Colorado Boulder, CO, USA.
        {\tt \small \{firstname.lastname\}@colorado.edu}
        }%
}
\newcommand{\ml}[1]{\textcolor{blue}{[ML: #1]}}
\newcommand{\na}[1]{{\color{red}[NA: #1]}}
\newcommand{\at}[1]{{\color{purple}[AT: #1]}}
\begin{document}

\AddToShipoutPictureBG*{%
  \AtPageUpperLeft{%
    \hspace{16.5cm}%
    \raisebox{-1.5cm}{%
      \makebox[0pt][r]{To appear in the IEEE Conference on Decision and Control (CDC), December 2022.}}}}

\maketitle

\thispagestyle{plain}
\pagestyle{plain}

\begin{abstract}
Although resource-limited networked autonomous systems must be able to efficiently and effectively accomplish tasks, better conservation of resources often results in worse task performance. We specifically address the problem of finding strategies for managing measurement communication costs between agents. A well understood technique for trading off communication costs with estimation accuracy is event triggering (ET), where measurements are only communicated when useful, e.g., when Kalman filter innovations exceed some threshold. In the absence of measurements, agents can use implicit information to achieve performance almost as good as when explicit data is always communicated. However, there are no methods for setting this threshold with formal guarantees on task performance. We fill this gap by developing a novel belief space discretization technique to abstract a continuous space dynamics model for ET estimation to a discrete Markov decision process, which scalably accommodates threshold-sensitive ET estimator error covariances. We then apply an existing probabilistic trade-off analysis tool to find the set of all optimal trade-offs between resource consumption and task performance. From this set, an ET threshold selection strategy is extracted. Simulated results show our approach identifies non-trivial trade-offs between performance and energy savings, with only modest computational effort. 
\end{abstract}

\section{INTRODUCTION}
    \label{sec:intro}
    
As autonomous agents become more prevalent in society, it is increasingly important to find ways to efficiently and effectively deploy them. These systems must be able to reliably perform given tasks with limited resources. This is a difficult challenge given that these two aspects are naturally competing: often the best performance of a task comes at a greater expense of resources. One facet of this problem is the cost of accurate estimation of the system state; more measurements of the state leads to better estimation and hence better decisions, but this comes at some resource expense. This is a particularly troublesome issue for networked systems that rely on shared information, where communication cost competes with the accuracy of state estimation. 

For example, a system might combine mobile agents with remote sensors as shown in Figure \ref{fig:EstimationDiagram}. While the sensors can make and communicate measurements with the agents, they may have limited computing capabilities and battery life and, in remote environments, may be difficult to access to change batteries. According to \cite{Shi2014:ML}, wireless transmission is at least an order of magnitude more costly in terms of energy consumption than any other function in a standard ZigBee networking chip. Maintaining consistent operation in these kinds of scenarios necessarily means limiting  communication in the remote sensing network. Additionally, systems may be bandwidth limited, in which case limiting communication frees up space for other valuable information, e.g., science data in robotic exploration missions. In this paper, we develop a formal framework to explore optimal trade-offs between communication cost and task performance. 

A popular method of managing communication costs is event-triggered (ET) estimation \cite{Trimpe2015, Ouimet2018, Shi2014, Wu2013}. In the ET framework, information is only communicated when it is deemed useful. This condition can take many forms \cite{Trimpe2015}, but a common choice is a threshold on the innovation of a measurement \cite{Ouimet2018, Shi2014, Wu2013}. By changing this threshold, we can change the frequency of communication. While this can be a powerful tool for trading-off energy usage with estimator accuracy, there are no methods for tuning this threshold to guarantee the  trade-off is optimal and accounts for task performance. 

Trade-off analysis techniques have been studied mostly in formal methods literature \cite{Etessami2007,Forejt2012, Hahn:QEST:2017, Hahn:TOMACS:2019}.  
Given a set of quantitative objectives and system model, the goal of multi-objective analysis is to calculate the set of all optimal trade-offs between the objectives, called the Pareto front.  Recent studies explore applications in robotics \cite{Romanova2020}, path planning \cite{lavin2015pareto,Lee2018}, multi-agent coordination \cite{Zhao2021,Ghrist2006,Cui2012}, 
and sensor scheduling \cite{Lahijanian2018}.
The latter specifically focuses on resource and performance objectives and develops a scheduling framework for the usage of a high precision, high cost sensor. Nevertheless, finding Pareto-optimal communication strategies between a sensor network and mobile agents with respect to multiples objectives remains an open problem.

\begin{figure}
    \centering
    \includegraphics[width=0.45\textwidth]{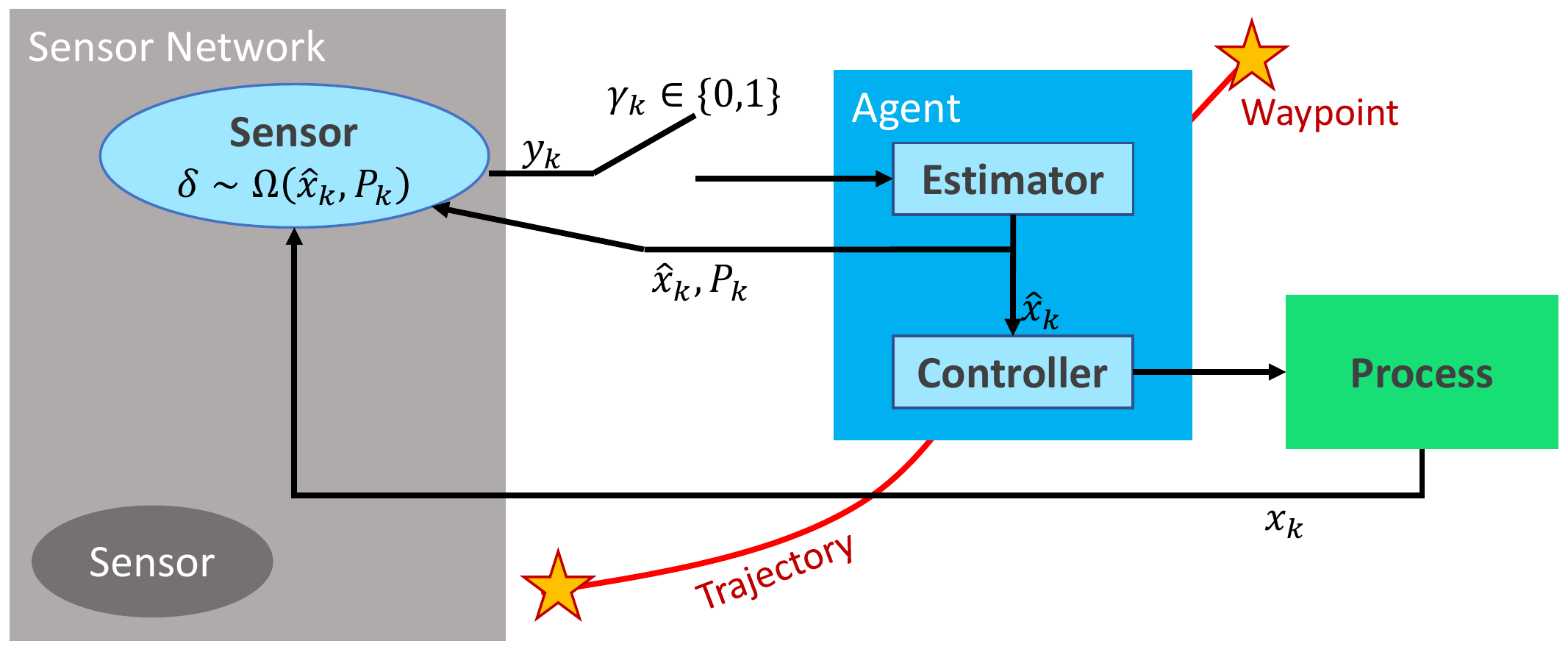}
    \vspace{-2mm}
    \caption{Diagram of triggering and estimation process}
    \label{fig:EstimationDiagram}
    \vspace{0mm}
\end{figure}

In this paper, we fill this gap by introducing a method of synthesizing strategies for setting ET thresholds to achieve optimal trade-offs between resource consumption and task performance. We specifically consider a scenario where a mobile agent has to reach a goal region while avoiding unsafe states (obstacles) by communicating with a resource-limited remote sensor network. 
Our approach is based on abstracting this system to a discrete Markov Decision Process (MDP). The abstraction involves a novel belief space discretization method  
using the spectral decomposition of the estimated covariance. 
We show this abstraction method is more efficient than alternatives and also benefits from interpretability. 
Then, we leverage off-the-shelf tools (PRISM \cite{Kwiatkowska2011}, PRISM-games \cite{Kwiatkowska2013}) to obtain the Pareto front and synthesize ET strategies for individual points on the front.  We show the efficacy of the approach in several case studies.

In short, the contributions are: (i) the use of multi-objective analysis to synthesize ET threshold strategies with optimal resource-performance trade-off guarantees, (ii) a novel technique to discretize a Gaussian belief based on the spectral decomposition of its covariance matrix, (iii) a method of using this discretization to abstract the continuous space system to an MDP, and (iv) a set of case studies that illustrate the efficacy of the method.


\section{PROBLEM FORMULATION}
    \label{sec:problem}

We consider an active agent communicating with a resource-limited remote sensor network. 
The nearest sensor to the agent is able to make and communicate measurements of the agent's state to improve its performance, but the communication comes at a resource cost. 
This setup can apply to many types of measurements, e.g.  remote one-way ranging measurements from beacons whose battery energy and data volume must be conserved in austere environments. 
The goal is to find a communication strategy to optimally trade-off sensor resource costs and agent task performance.

\subsection{System Model}
\label{subsec:SysModel}
The active agent evolves according to linear dynamics, 
\begin{equation}
    \label{eq:system dynamics}
    x_{k+1} = F x_k + G u_k + w_k, \quad w_k \sim \mathcal{N}(0,Q),
\end{equation}
where $x_k\in X \subseteq \mathbb{R}^n$ and $u_k\in U\subseteq\mathbb{R}^p$ are the state and input, respectively, with $F\in\mathbb{R}^{n\times n}$ and $G\in\mathbb{R}^{n\times p}$. Random process noise $w_k \in \reals^n$ is modeled as white Gaussian noise with 
covariance $Q\in\mathbb{R}^{n\times n}$.
The initial agent state $x_0 \sim \mathcal{N}(\hat{x}_0,P_0)$ is assumed Gaussian distributed, with mean $\hat{x}_0 \in \mathbb{R}^n$ and covariance $P_0 \in \mathbb{R}^{n \times n}$. 
%
The \emph{nearest} remote sensor measures the agent state as
\begin{equation}
    \label{eq:system measurement}
    y_k = Hx_k + v_k,\quad v_k \sim \mathcal{N}(0,R),
\end{equation}
where $y_k\in Y\subseteq\mathbb{R}^m$ is the measurement with $H\in\mathbb{R}^{m\times n}$. Random measurement noise $v_k \in \reals^m$ is also modeled as white Gaussian noise with 
covariance $R\in\mathbb{R}^{m\times m}$. 
We assume the system is both \emph{controllable} and \emph{observable}, and that the noise 
covariance matrices $Q$ and $R$ are positive definite.
For ease of presentation, we assume the agent 
only has access to measurement data from the nearest remote sensor. The extension to include local measurements is trivial.

\subsection{Event-Triggered State Estimation}
\label{subsec:etest}

Since sensors are resource limited, the agent uses an event triggered (ET) state estimator for efficient information sharing. 
When a measurement $y_k$ is taken, the sensor determines whether to report $y_k$ to the agent. This decision is based on a shared estimate of $x_k$ provided by some central estimator, e.g., the agent.  
We specifically focus on innovation-based ET described in \cite{Wu2013} although many other triggers exist \cite{Trimpe2015}. 

Let $\gamma_k \in \{0,1\}$ be the indicator for measurement communication at time $k$, where $\gamma_k = 1$ if the measurement is sent, and $\gamma_k = 0$ otherwise. Define $\I_k = (I_1, \ldots, I_k)$, $k \in \naturals$, as the sequence of available \emph{information} up to time $k$, where 
\begin{equation}
    I_j = \begin{cases}
    \{y_j\} & \text{if $\gamma_j=1$}\\
    \emptyset & \text{otherwise.}
    \end{cases}, \ \mbox{for } 1 \leq j \leq k.
\end{equation}
Information $\I_k$ is used to reason over the probability distribution of $x_k$.
This distribution is denoted by $b_k$ and is referred to as the \emph{belief}, i.e., 
\begin{equation}
    x_k \sim b_k = \P(x_k \mid x_0, \mathcal{I}_k), \label{eq:beldef}
\end{equation}
and the set of all beliefs is denoted by $\beliefs$. 

Under the assumption that $\P(x_k \mid x_0, \mathcal{I}_{k-1})$ is Gaussian, work \cite{Wu2013} introduces an ET Minimum Mean Square Error (MMSE) estimator,
where the belief is fully described by its mean and covariance.
The expected value of the state based on the belief is the MMSE \textit{state estimate}. Before a measurement occurs, the \emph{a priori} (predicted) state estimate and error $e_k^-$ covariance at time $k$ are given by
\begin{align}
    &\hat{x}_k^-=\mathbb{E}[x_k \mid \mathcal{I}_{k-1}], \quad 
    e_k^{-}=x_k - \hat{x}_k^{-}, \\ 
     &P_k^{-} = \mathbb{E}[e_k^{-} e^{-T}_k \mid \mathcal{I}_{k-1}].
\end{align}
After a measurement occurs, the \emph{a posteriori} (updated) estimate and error $e_k$ covariance are, for any choice of $\gamma_k$,
\begin{align}
    &\hat{x}_k = \mathbb{E}[x_k \mid \mathcal{I}_k], \label{eq:postmeandef} \quad
     e_k = x_k - \hat{x}_k, \\
    &P_k = \mathbb{E}[e_k e_k^T \mid \mathcal{I}_k] \label{eq:postcovdef}.
\end{align}
In innovation-based ET, $\gamma_k$ is determined based on the measurement innovation,
$z_k = y_k - \mathbb{E}[y_k \mid \mathcal{I}_{k-1}],$
which is a Gaussian random vector with covariance $Z_k = R + HP_k^- H^T$. 
Let $\mathcal{F}_k = V_k \Lambda^{-1/2}_k$, where the columns of $V_k \in \reals^{n\times n}$ are the orthonormal eigenvectors of $Z_k$, and $\Lambda_k\in\mathbb{R}^{n\times n}$ is the diagonal eigenvalue matrix of $Z_k$, 
so that the covariance of $\epsilon_k = \mathcal{F}_k z_k$ is the identity matrix. The triggering condition is then defined in terms of a given threshold $\delta \in \mathbb{R}_{>0}$ as, 
\begin{equation}
    \gamma_k = 
    \begin{cases}
    0 & \text{if $\|\epsilon_k\|_\infty\leq\delta$}\\
    1 & \text{otherwise}.
    \end{cases}
\end{equation}
This corresponds to assessing how `surprising' $y_k$ is. For small $\delta$, many measurements are surprising, and many triggers (communications) occur; for large $\delta$, fewer measurements are surprising and thus fewer are communicated. 
As such, (\ref{eq:postmeandef}) and (\ref{eq:postcovdef}) can be obtained through a recursive Kalman filter approximation, which attenuates the Kalman gain as a function of $\delta$ when $\gamma_k=0$; see \cite{Wu2013} for complete details of this ET Kalman filter, which produces a standard Kalman update if $\gamma_k = 1$ or $\delta=0$. 
The key advantage of the ET filter is when $\gamma_k = 0$ (i.e., $I_k = \emptyset$), since the agent can still extract `implicit' information from the fact that $\|\epsilon_k\|_\infty\leq\delta$. This can be used to compute a more accurate \textit{a posteriori} estimate than the one obtained simply by keeping an unmodified \emph{a priori} estimate from the Kalman prediction step. 

Under ET estimation, the agent's task performance and sensor resource consumption now depend on the choice of $\delta$. 
However, there are no established methods to algorithmically select or adjust $\delta$ to provide collective formal guarantees on task performance and resource consumption effects. 
In this work, we design a strategy for choosing ET thresholds 
over the course of the agent's trajectory to achieve an optimal resource-performance trade-off.

\subsection{Task and Control Laws}
\label{subsec:controlLaws}

The agent's task is to follow a nominal trajectory to a target region while avoiding unsafe states (e.g. obstacles). 
Assume the trajectory is provided via a sequence of waypoints $\wpSet = (\wp_0, \ldots, \wp_N)$ such that at $\wp_i \in \mathbb{R}^n$, for each $i\in \{0,..,N\}$, the nearest sensor to the agent changes. Such waypoints can be easily obtained, e.g. using a motion planner followed by a Voronoi segmentation with respect to the sensor network. 
Hence, the waypoints are the decision points where the triggering threshold for the next sensor (and thus the corresponding trajectory segment) is determined. 

To follow the trajectory, the agent is equipped with a series of (feedback) control laws connecting the waypoints in $\wpSet$, coupled with a termination condition that determines when to switch to the next controller.
Let $\mathcal{U}_i: X \times \wpSet \to U$ be the control law that drives the agent from waypoint $\wp_{i-1}$ to $\wp_{i}$ and $\zeta_i: X \times \wpSet \times \naturals \to \{0,1\}$ be the corresponding termination condition.  Then, the agent applies control law $\mathcal{U}_i$ until $\zeta_i$ is triggered.  We design $\zeta_i$ to trigger as  
\begin{equation}
    \label{eq:meanConvCond}
    \zeta_i(\hat{x}_k, \wp_i, k_i) = 
    \begin{cases}
        1 & \text{if } \|\hat{x}_k - \wp_i\|_2 \leq \tol_x \text{ or } k_{i} \geq k_{\max}\\
        0 & \text{otherwise,}
    \end{cases}
\end{equation}
where $\tol_x \in\mathbb{R}_{>0}$ is the convergence tolerance, $k_{i}$ is time duration since switching to $\mathcal{U}_i$, and $k_{\max}$ is the max time duration threshold. The only requirement on $\mathcal{U}_i$ is to drive the agent to proximity of $\wp_i$. It can, for instance, consist of a reachability and a stabilizing controller using the LQG control architecture described in \cite{AghaMohammadi2014,Lahijanian2018}.

\subsection{Resource and Performance Objectives}
\label{subsec:objFunc}
We seek to choose ET thresholds at waypoints in $\wpSet$ to achieve an optimal resource-performance trade-off.
Let $\Delta = \{\delta_1,\delta_2,\ldots, \delta_{|\Delta|}\}$ be a set of ET threshold candidates, and $\mathcal{D}(\Delta)$ be the set of all probability distributions over $\Delta$.
We define an \textit{ET strategy} $\strategy: \mathbb{B} \rightarrow \mathcal{D}(\Delta)$ to be a function that maps a belief $b_k \in \mathbb{B}$ at a waypoint to a probability distribution over $\Delta$.
We want to compute $\strategy$ according to the following three competing objectives: 
(O1) minimize the probability of visiting an unsafe state (obstacle), (O2) maximize the probability of reaching the target region, and (O3) minimize resource consumption.  We refer to (O1) and (O2) as the performance objective and (O3) as the resource objective. We now formalize (O1)-(O3) as a function of $\strategy$.

Given $\strategy$, $b_k$ itself is a random variable before agent deployment due to the stochasticity of $y_k$.
To provide guarantees prior to deployment, we express the objectives based on the \emph{expected belief} defined as:
\begin{align}
    \label{eq: expected belief}
    \textbf{b}_k(\strategy) &= \mathbb{E}_Y(b_k \mid x_0, y_{0:k}, \strategy) \\
    &= \int_{y_{0:k}} \P(x_k \mid x_0, y_{0:k}, \strategy) pr(y_{0:k})dy \nonumber \\
    &= \P(x_k \mid x_0, \strategy) \nonumber
\end{align}

Consider a set of $M$ obstacles (unsafe states) $\mathcal{O}=\{X_{O_1}, ..., X_{O_M}\}$, where $X_{O_i} \subset X$ for every $i \in \{1,...,M\}$. 
Assuming that collisions with obstacles are terminal, the probability of collision with obstacle $X_{O_i}$ under $\strategy$ is 
\begin{equation*}
    \prob_{\coll,i}(\strategy) = pr(x_{0:k_T} \in X_{O_i}  \mid \strategy) = \int_{X_{O_i}} \hspace{-2mm} \textbf{b}_{k_T} (\strategy) dx_{k_T},
\end{equation*}
where $k_T$ is the termination time of traversing trajectory $\wpSet$.
Then, for (O1), the total probability of collision is the sum of the probabilities over all the individual obstacles,
\begin{equation}
    \prob_{\coll}(\strategy) = pr(x_{0:k_T} \in \mathcal{O} \mid \strategy) = \sum_{i=1}^{M} \prob_{\coll,i}(\strategy).
\end{equation}
Similarly, for (O2), the probability of ending in the target region $X_{\tar} \subset X$ is given by
\begin{equation}
    \prob_{\tar}(\strategy) = pr(x_{k_T} \in X_{\tar} \mid \strategy) = \int_{X_{\tar}} \hspace{-2mm} \textbf{b}_{k_{T}} (\strategy) dx_{k_T}.
\end{equation}


For (O3), the resource cost can generally be defined as a function of the $x$, $u$, and $\delta$. For ease of presentation, we define this cost solely based on the energy consumed by communicating measurements, but the extension to include controls is straightforward. 
The instantaneous communication cost at time $k$ is $c_m\gamma_k$, where $c_m \in \reals_{\geq 0}$. The expected $\gamma_k$ is a function of $\delta$, i.e. $\bar{\gamma}_k(\delta) = \mathbb{E}[\gamma_k]$. 
Then, the expected total cost over a trajectory is
\begin{equation}
    C_E(\strategy) = \sum_{k=0}^{k_T}  \sum_{\delta \in \Delta} \strategy(\textbf{b}_k)(\delta) \, c_m\bar{\gamma}_k(\delta).
\end{equation}


\subsection{Problem Statement}
We consider the following problem: 
given a system model as described in Section \ref{subsec:SysModel}, a set of obstacles $\mathcal{O}$, a target region $X_{\tar}$, waypoints $\wpSet$, controllers as defined in Section \ref{subsec:controlLaws}, and a set of ET thresholds $\Delta$, compute optimal ET strategy $\strategy^*$ such that
\begin{equation}
    \label{eq: pareto optimal}
    \strategy^* = \argmin_{\strategy}\big(C_E(\strategy), \prob_{\coll}(\strategy), 1-\prob_{\tar}(\strategy) \big),
\end{equation}
where $\min$ is a simultaneous minimization of every element. 
%

Note that \eqref{eq: pareto optimal} is a multi-objective optimization problem, and since the objectives are (often) competing, there may not exist a single solution that simultaneously optimizes each objective.  For this reason, we study the optimal trade-offs between the objectives.  Specifically, we aim to find the set of all optimal trade-offs  (Pareto front). Then, given a point from this set, we synthesize the corresponding $\strategy^*$.


It is exceedingly difficult to directly calculate the set of all trade-offs for a continuous system.  We instead approach this problem by constructing an abstraction of the continuous system as a finite MDP.  Multi-objective optimization algorithms are well established for MDPs and implemented in tools such as PRISM \cite{Kwiatkowska2011}, which we use for our analysis.


\section{ABSTRACTION METHOD}
    \label{sec:Methodologies}

Finite abstraction is often performed via a discretization of the continuous state space.
In absence of sensing uncertainty, this is easily addressed by polytopic partitions, e.g., a grid. However, with measurement noise, the true state of the system is unknown; hence, we must reason over the belief space. 
Recall from \eqref{eq:beldef} that the belief at a waypoint is dependent on the history of measurements and threshold choice. Consequently, the belief evolution captured at waypoints is as an exponentially growing graph \cite{Lahijanian2018}. 
To avoid such explosion, we seek discretization methods that break the history dependence.
Specifically, 
we achieve this by either enforcing convergence to a single belief at a waypoint or by grouping the beliefs at each waypoint.  Then, the transition from a belief $b$ at waypoint $\wp_i$
to another belief at $\wp_{i+1}$ 
becomes dependent only on $b$ and the choice $\delta$ at $\wp_i$, not on the entire history. This naturally leads to an MDP construction where the MDP states are formed by the converged belief or belief sets.


Existing methods, like \cite{AghaMohammadi2014} and \cite{Lahijanian2018}, rely on convergence to a known belief. Our first proposed method is an arguably trivial extension of this existing work, wherein we enforce convergence to steady state beliefs about the waypoints. However, we demonstrate that this naive approach yields too coarse a gradation in the Pareto front, and does not fully reap the energy savings offered by ET. To address this, we develop a novel abstraction method that exploits the inherent geometry of the variable covariance hyperellipsoids obtained under ET to discretize and group similar beliefs together.

\subsection{MDP Abstraction}
An MDP $\mathcal{M}=(S, s_0, A, T, \costMDP)$ is a tuple consisting of a finite set of states $S$, an initial state $s_0 \in S$, a set of actions $A$, a probabilistic transition function $T: S \times A \rightarrow [0,1]$, and a cost function $\costMDP: S \times A \rightarrow \mathbb{R}_{\geq0}$ that assigns to each state-action pair to a non-negative cost.

In our MDP abstraction, the action set $A = \Delta$, the set of ET thresholds, and each state $s \in S$ is defined as a set of beliefs at each waypoint. 
The cost $\costMDP(s,\delta)$, where $\delta \in \Delta$, is the expected communication cost from the waypoint $\wp$ corresponding to MDP state $s$, to the next waypoint, and is given by
\begin{equation}
    \costMDP(s,\delta) = \bar{k}(\wp,\delta) \, \gamma(\delta),
\end{equation}
where $\bar{k}(\wp,\delta)$ is the expected number of time steps required to navigate from $\wp$ to the the next waypoints under $\delta$. 
Further, we include three states $s_\coll$, $s_\tar$, and $s_{\mathrm{free}}$ in addition to the belief states in $S$ to represent termination in an obstacle, target region, and neither, respectively. 

Below, we detail two methods to construct the MDP states: enforced KF convergence and discretized belief states.
\begin{figure}
    \centering
    \includegraphics[width=0.48\textwidth]{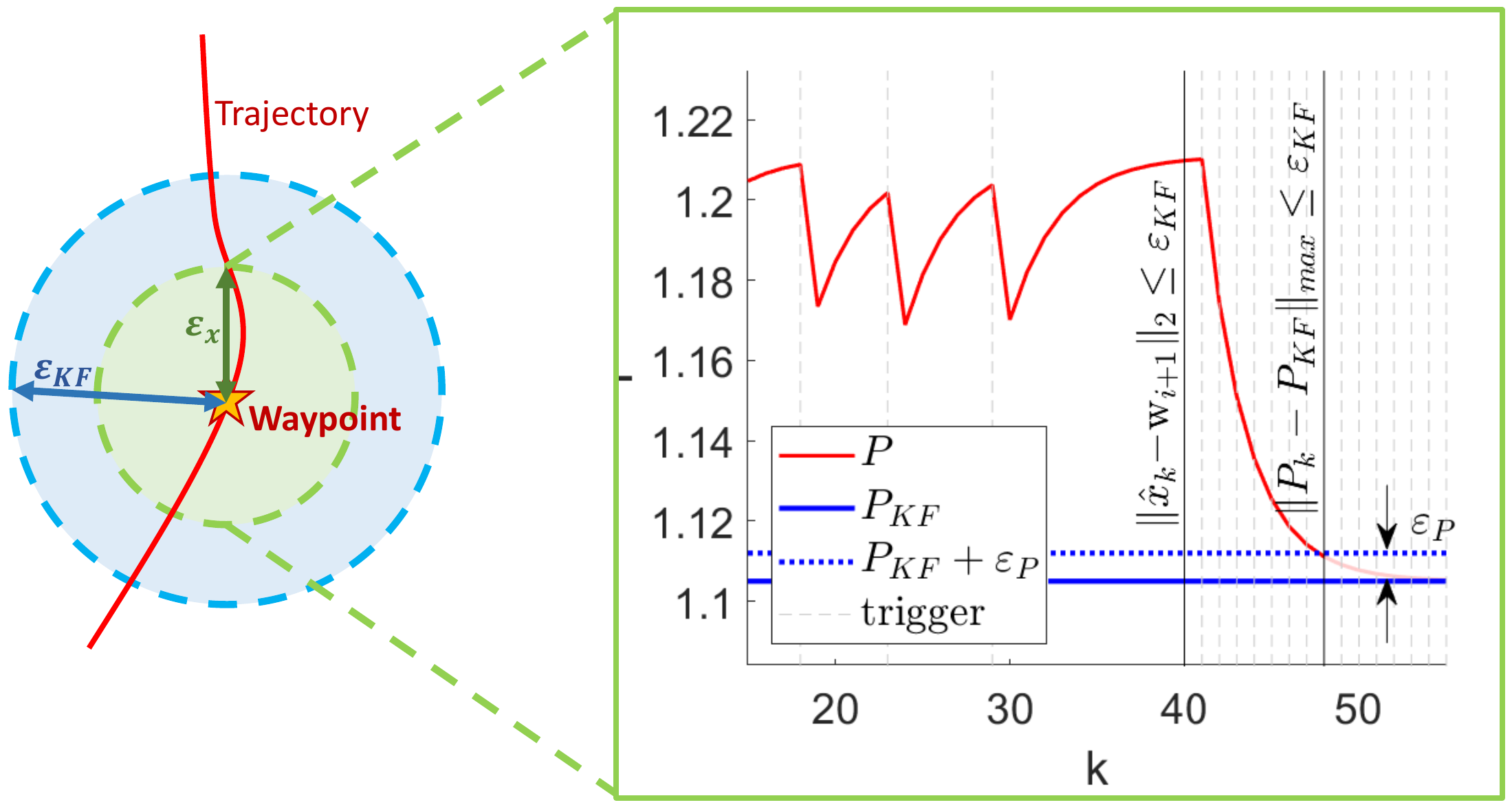}
    \caption{Diagram of Belief Convergence method}
    \label{fig:beliefConvergence}
    \vspace{0mm}
\end{figure}

\begin{figure*}
    \centering
    \begin{subfigure}{0.19\textwidth}
    \includegraphics[width=\textwidth]{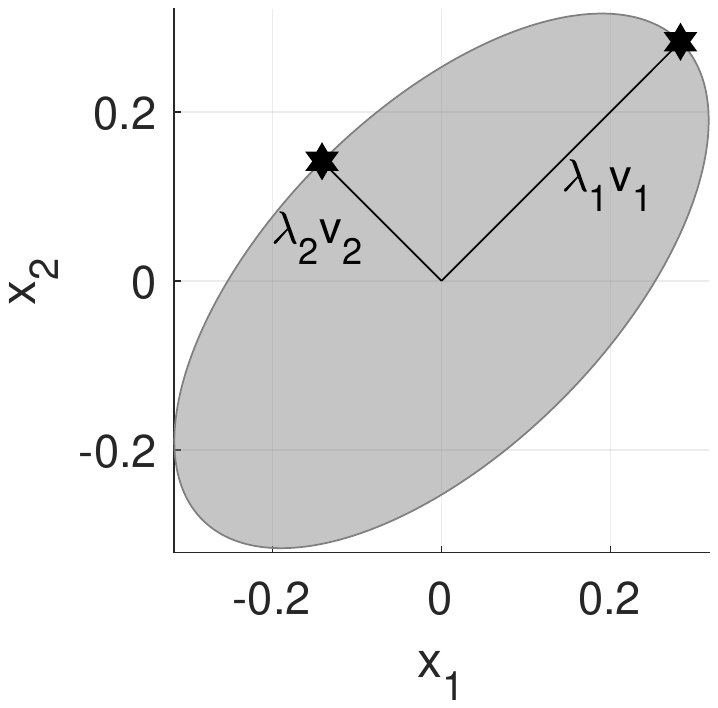}
    \caption{}
    \label{fig:covEll}
    \end{subfigure}
    \hfill
    \begin{subfigure}{0.19\textwidth}
    \centering
    \includegraphics[width=\textwidth]{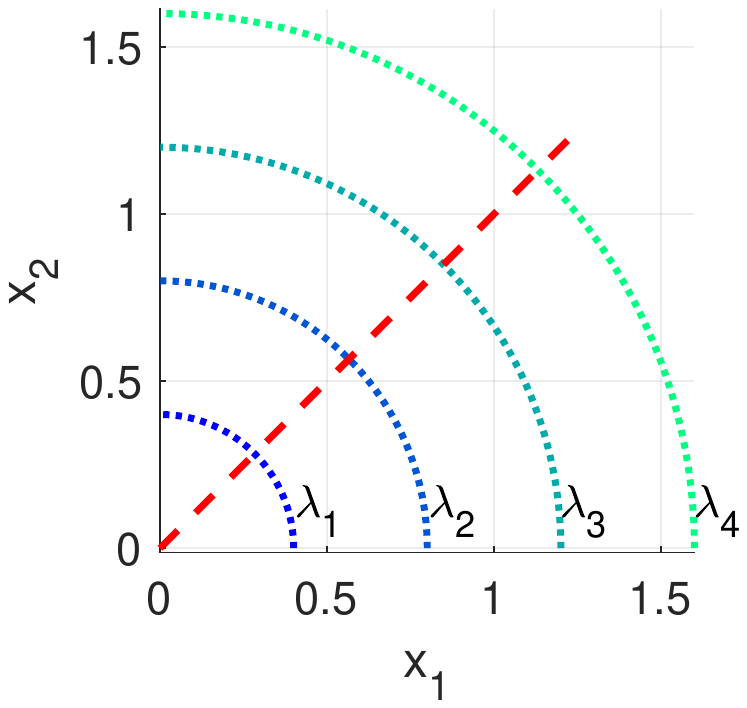}
    \caption{}
    \label{fig:DiscEllAngles}
    \end{subfigure}
    \hfill
    \begin{subfigure}{0.19\textwidth}
    \includegraphics[width=\textwidth]{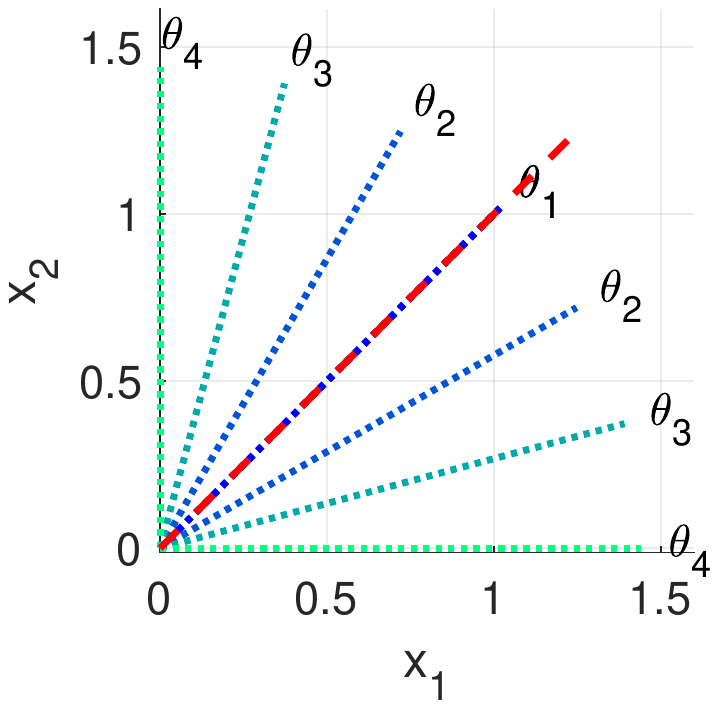}
    \caption{}
    \label{fig:DiscEllMagnitudes}
    \end{subfigure}
    \hfill
    \begin{subfigure}{0.19\textwidth}
    \includegraphics[width=\textwidth]{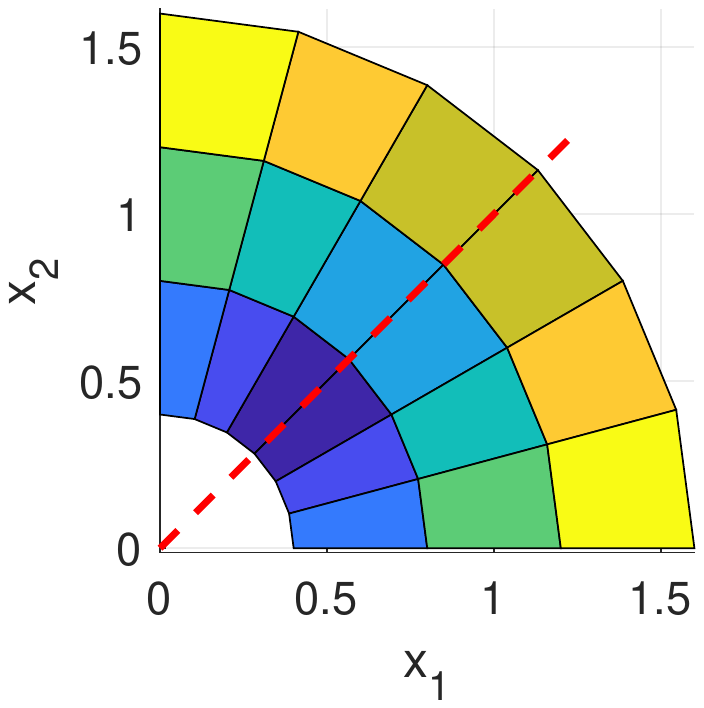}
    \caption{}
    \label{fig:DiscEllRegions}
    \end{subfigure}
    \hfill
    \begin{subfigure}{0.19\textwidth}
    \includegraphics[width=\textwidth]{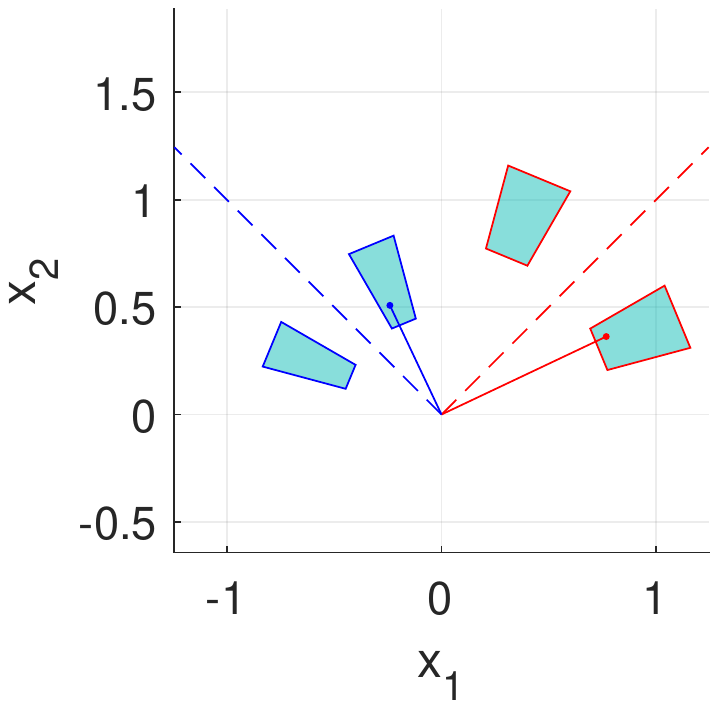}
    \caption{}
    \label{fig:DiscEllTest}
    \end{subfigure}
\caption{(a) Example 2D covariance ellipse decomposed via eigenvalues $\lambda_i$ and eigenvectors $v_i$. (b) Magnitude discretization of a single axis, with $v_{nom}$ shown as a red dashed line.  (c) Angle discretization of a single axis, with $v_{nom}$ shown as a red dashed line.  (d) Full region discretization of a single axis. 
(e) Sample covariance contained within a state defined by a region in the first axis (shown in red and discretized as in c.), and a region in the second axis (shown in blue).}

\end{figure*}
\subsection{Method 1: Enforced Belief Convergence} 

This method relies on convergence to a pre-computed belief to break the history dependence of the belief graph discussed above.
This pre-computed belief at waypoint $\wp$ is a Gaussian distribution defined by the KF steady state covariance $P_\text{KF} \in \mathbb{R}^{n \times n}$, i.e., $\mathcal{N}(\wp, P_\text{KF})$.

To ensure convergence to this belief, we force the system to switch from ET to KF estimation within some neighborhood of $\wp$. 
Let $\tol_\text{KF}\in\mathbb{R}_{>0}$ be the radius of this neighborhood.   
The system switches to KF when $\|\hat{x}_k - \wp_{i+1}\|_2 \leq \tol_{KF}$.  Then, it continues under KF until both the covariance convergence condition, $\|P_k - P_{KF}\|_{max} \leq \tol_P$, where $\tol_P \in \mathbb{R}_{>0}$, and \eqref{eq:meanConvCond}, are met.  
See Figure \ref{fig:beliefConvergence} for reference.
The belief state at each waypoint is thus known to be Gaussian with a covariance within $\tol_P$ of $P_{KF}$ and a mean within $\tol_x$ of the waypoint. So, the MDP state under the Enforced KF Convergence Method is defined as a tuple: 
$s = (\wp, \tol_x, P_{KF}, \tol_P).$

Note that, under every choice of $\delta$,  the system converges to the same MDP belief state at each waypoint. Hence, one of the benefits of this method is its low complexity: the total number of states is $N+3$, where $N$ is the number of waypoints.  It however suffers from constraining the system and does not fully exploit the benefits of ET estimation.

\subsection{Method 2: Discretized Belief State}
The second method operates only in the ET mode
and uses only the mean convergence criterion in \eqref{eq:meanConvCond} for termination. 
This means $P_k$ at each waypoint can be quite variable. As there are two possible cases for the $P_k$ update (implicit and explicit information updates), stochastic $y_k$ induces random switching between these two cases. This switching precludes convergence to steady state $P_k$; even under the same $\delta$, two trajectories may arrive at the same waypoint with drastically different $P_k$. 
We form the MDP states from sets of $b_k$ at each waypoint with $\hat{x}_k$ captured by criterion in \eqref{eq:meanConvCond} and $P_k$ captured by a discretized covariance state (described below). The obtained MDP using this grouping of beliefs leads to history-independent state transitions only up to the resolution used for the covariance discretization, called \textit{abstraction error}.
In Sec. \ref{subsec:DiscErr}, we show that the abstraction error goes to zero as the discretization size goes to zero.

The simplest approach is to directly discretize the individual elements of $P_k$. However, this approach is inefficient and exacerbates the state explosion problem. Instead, we rely on a geometric interpretation of $P_k$: 
the spectral decomposition of $P_k$ results in $n$ eigenvectors and eigenvalues, which respectively describe the orientation axes and magnitudes of an associated uncertainty (hyper)ellipse, as shown in Figure \ref{fig:covEll}. 
The covariance state is then taken as a combination of discrete \emph{regions} across each axis. 


A discrete region for each axis (eigenvector) is defined by 3 elements: a nominal vector, an angle range, and a magnitude range. The nominal vector, $v_{nom} \in \mathbb{R}^n$, is a unit vector that provides the orientation of the region. While the nominal vector can be chosen as any vector in $\mathbb{R}^n$, a good choice is to base it on empirically sampled data. The angle range between $\theta_{low}, \theta_{high} \in (0,\pi/2]$, is defined \emph{relative} to the $v_{nom}$, with the angle calculated via the Euclidean inner product, (various angle choices are shown in \ref{fig:DiscEllAngles}). The magnitude range is between $\lambda_{low}, \lambda_{high} \in \reals_{>0}$ (various magnitude choices are shown in Figure \ref{fig:DiscEllMagnitudes}). A region described by $v_{nom}$, angles $\theta_{low}$ and $\theta_{high}$, and magnitudes $\lambda_{low}$ and $\lambda_{high}$, is defined as the set 
\begin{equation}
    R = \{(\lambda, \theta) \mid \lambda \in [\lambda_{low}, \lambda_{high}) \wedge \theta \in [\theta_{low}, \theta_{high})\},
\end{equation}
for eigenvalue $\lambda$ and angle $\theta$ from the eigenvector to $v_{nom}$ (the combinations of angle and magnitude ranges to form
regions are shown as colored areas in Figure \ref{fig:DiscEllRegions}).

The full covariance state, $c_R$ is taken as a combination of regions for each dimension: $c_R = (R_1, ..., R_n)$. Figure \ref{fig:DiscEllTest} shows the decomposition of an example covariance, and the covariance state it falls in. The MDP state under the Discretized Belief State Method is defined as a tuple: $s = (\wp, \tol_x, c_R)$.

Defining the states as described relies on some knowledge of the upper and lower bounds of the covariance. While the lower bound can be trivially obtained via the KF steady state covariance, we currently have no method of determining the theoretical upper bound for the covariance. Although this is a potential avenue for future work, a good empirical bound can be obtained by propagating the covariance through the update equation and enforcing $\gamma_k=0$ at every time step. 

\subsubsection{Discretization Error}
\label{subsec:DiscErr}
Modeling errors arise whenever a continuous state space dynamical system is spatially discretized. In the MDP, this error presents in the transition probabilities: belief states originating from some parts of the same region (as defined above) have different transition probabilities from those originating in different parts of the same region. This discrepancy in the probabilities is the abstraction error. Theorem \ref{theorem}  shows that as the volume of the discretized belief state, i.e. the size of the ranges $(\theta_{low}, \theta_{high})$ and $(\lambda_{low}, \lambda_{high})$, goes to zero, the abstraction error goes to zero.

\begin{theorem}\label{theorem}
Consider the transition from a set of belief states $\mathcal{S}$ to another set $\mathcal{S}'$. The volume of each set is defined by the discretization sizes: $d\lambda=\lambda_{high}-\lambda_{low}$, $d\theta=\theta_{high}-\theta_{low}$. Assume $d\lambda$ and $d\theta$ are equivalent for $\mathcal{S}$ and $\mathcal{S}'$. Every point in the originating set, $s \in \mathcal{S}$, has an associated transition probability to $\mathcal{S}'$, $P(\mathcal{S}'|s)$. The maximum difference between transition probabilities within the set is:
\begin{equation}
    \Delta P_{max}=\max_{s_1,s_2\in \mathcal{S}} |P(\mathcal{S}'|s_1) - P(\mathcal{S}'|s_2)|.
\end{equation}
As the discretization step sizes go to zero, this difference must also go to zero:
\begin{equation}
    \lim_{d\lambda \to 0, d\theta\to 0} \Delta P_{max} = 0.
\end{equation}

\end{theorem}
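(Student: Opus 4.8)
The plan is to read $\Delta P_{max}$ as the oscillation of the map $s \mapsto P(\mathcal{S}' \mid s)$ over the cell $\mathcal{S}$, and to bound this oscillation by a modulus of continuity that collapses as the cell shrinks. Since the mean of every belief in $\mathcal{S}$ is pinned to a neighborhood of the waypoint by the termination criterion \eqref{eq:meanConvCond}, I would identify each $s \in \mathcal{S}$ with its covariance $P_s$, and through the spectral coordinates $(\lambda,\theta)$ of the region with a point of the box $[\lambda_{low},\lambda_{high}) \times [\theta_{low},\theta_{high})$. Passing to the closure $\overline{\mathcal{S}}$ gives a compact parameter set whose diameter is $O(d\lambda + d\theta)$, so the theorem reduces to showing that $P(\mathcal{S}' \mid \cdot)$ cannot vary by more than its modulus of continuity over a set of vanishing diameter.

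Next I would make the transition kernel explicit. Fixing the action $\delta$, the ET Kalman recursion \cite{Wu2013} updates the covariance deterministically given the binary trigger pattern $\gamma_{1:k}$: the prediction $P^- = F P F^\top + Q$, the innovation covariance $Z = R + H P^- H^\top$, and the $\delta$-attenuated gain each depend continuously on the running covariance and not on the measurement value. Hence the covariance reached at the next waypoint is $\Phi_{\gamma_{1:k}}(P_s)$ for a composition $\Phi$ of at most $k_{\max}$ continuous maps, and
\[ P(\mathcal{S}' \mid s) = \sum_{\gamma_{1:k}} \mathbf{1}\!\left[\Phi_{\gamma_{1:k}}(P_s) \in \mathcal{S}'\right]\,\Pr[\gamma_{1:k} \mid P_s], \]
a finite sum (at most $2^{k_{\max}}$ terms) in which each pattern probability is a product of per-step triggering probabilities $\bar{\gamma}_k(\delta)$, and each $\bar{\gamma}_k(\delta)$ is the Gaussian measure of the box $\{\|\epsilon_k\|_\infty \le \delta\}$, a smooth function of $Z$ and thus of $P_s$.

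The core estimate then comes from comparing two points $s_1, s_2 \in \mathcal{S}$ term by term, splitting the patterns into those whose indicator agrees at $s_1$ and $s_2$ and those where it flips. The agreeing part is bounded by $\sum_{\gamma_{1:k}} |\Pr[\gamma_{1:k} \mid P_{s_1}] - \Pr[\gamma_{1:k} \mid P_{s_2}]|$, which vanishes with $\|P_{s_1}-P_{s_2}\|$ by continuity of the triggering probabilities. The flipping part is where the difficulty concentrates: a pattern flips only if $\Phi_{\gamma_{1:k}}(P_{s_1})$ and $\Phi_{\gamma_{1:k}}(P_{s_2})$ lie on opposite sides of $\partial\mathcal{S}'$, so its mass is bounded by the probability that $\Phi_{\gamma_{1:k}}(P_s)$ lands within $O(\mathrm{diam}\,\overline{\mathcal{S}})$ of $\partial\mathcal{S}'$. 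I would control this by noting that each $\Phi_{\gamma_{1:k}}$ is Lipschitz on the compact $\overline{\mathcal{S}}$, so the image cluster $\Phi_{\gamma_{1:k}}(\overline{\mathcal{S}})$ has diameter $O(\mathrm{diam}\,\overline{\mathcal{S}})$; as $d\lambda,d\theta \to 0$ these clusters contract to points and the probability carried by patterns whose image straddles the fixed boundary $\partial\mathcal{S}'$ tends to zero. Combining the two parts yields $\Delta P_{max} \le \varpi(\mathrm{diam}\,\overline{\mathcal{S}})$ for a modulus $\varpi$ with $\varpi(r) \to 0$, and since $\mathrm{diam}\,\overline{\mathcal{S}} \to 0$ the claim follows.

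The main obstacle is exactly the flipping/boundary-crossing term: a set-to-set transition probability need not be continuous where the image of the cell crosses $\partial\mathcal{S}'$, so the argument hinges on showing this boundary mass is genuinely negligible rather than merely small. The clean way to secure this is a portmanteau-type statement — the set of trigger patterns (equivalently, measurement realizations through the stopping rule) mapping onto $\partial\mathcal{S}'$ is null, because $\partial\mathcal{S}'$ consists of the lower-dimensional level sets $\{\lambda = \text{const}\}$ and $\{\theta = \text{const}\}$ while the governing Gaussian law is absolutely continuous. A secondary subtlety I would address is that the horizon $k$ is itself random, fixed by the mean-convergence rule \eqref{eq:meanConvCond}; this is handled by the uniform bound $k \le k_{\max}$, which keeps the sum finite and $\Phi$ a finite composition of continuous maps, so no uniformity is lost as $s$ ranges over $\overline{\mathcal{S}}$.
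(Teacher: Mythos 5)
Your route is entirely different from the paper's --- the paper's proof merely observes that $\max$ and $\min$ of $P(\mathcal{S}'\mid\cdot)$ are monotone under set inclusion and then \emph{asserts} that the two converge as the volume shrinks, which is really just restating the claim and tacitly assuming continuity of $s\mapsto P(\mathcal{S}'\mid s)$. You instead try to prove that continuity, which is the right instinct and identifies the actual mathematical content. Your treatment of the ``agreeing'' term is sound: given the action $\delta$, the ET--KF covariance recursion is measurement-independent conditioned on the trigger pattern, the pattern probabilities are Gaussian box masses varying continuously in $P_s$, and the horizon is capped by $k_{\max}$, so the total-variation distance between pattern laws vanishes with $\|P_{s_1}-P_{s_2}\|$.

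The gap is in your resolution of the flipping/boundary term, which you correctly flag as the crux. Precisely because the covariance reached at the next waypoint is the deterministic image $\Phi_{\gamma_{1:k}}(P_s)$ of the trigger pattern, the pushforward law on covariance space is purely \emph{atomic} (at most $2^{k_{\max}}$ atoms), not absolutely continuous; the absolute continuity of the Gaussian measurement law lives in $y$-space and does not transfer through $\Phi$, since $\Phi$ ignores the measurement values. So the portmanteau argument does not apply: if some atom $\Phi_{\gamma_{1:k}}(P_{s^*})$ sits exactly on $\partial\mathcal{S}'$ for the limit point $s^*$ of the shrinking cell, the indicator $\mathbf{1}[\Phi_{\gamma_{1:k}}(P_s)\in\mathcal{S}']$ flips between nearby $s_1,s_2$ for arbitrarily small cells, contributing a persistent term of size $\Pr[\gamma_{1:k}\mid P_{s^*}]>0$ to $\Delta P_{max}$. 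Closing this requires a non-degeneracy assumption (no atom lies on a cell boundary, or convergence holds for almost every placement of the grid), which neither your argument nor the paper's supplies. A secondary issue: the theorem takes $d\lambda,d\theta$ equivalent for $\mathcal{S}$ and $\mathcal{S}'$, so $\partial\mathcal{S}'$ is not fixed as you assume --- the target cell shrinks along with the source cell, and the straddling analysis must track both.
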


\begin{proof}

Within the set $\mathcal{S}$, there must exist some point, $s_{max}$, that maximizes the transition probability, and some point, $s_{min}$, that minimizes this probability:
\begin{equation}
    s_{max}=\max_{s\in \mathcal{S}} P(\mathcal{S}'|s), \quad s_{min}=\min_{s\in \mathcal{S}} P(\mathcal{S}'|s).
\end{equation}

Now, consider a new set that is strictly a subset of $\mathcal{S}$, $\tilde{\mathcal{S}}\subset \mathcal{S}$. Because it is a subset of $\mathcal{S}$, we can use $\tilde{\mathcal{S}}$ to bound the probabilities:
\begin{align}
\max_{s\in \mathcal{S}} P(\mathcal{S}'|s) \geq \max_{s\in \tilde{\mathcal{S}}} P(\mathcal{S}'|s) \\
\min_{s\in \mathcal{S}} P(\mathcal{S}'|s) \leq \min_{s\in \tilde{\mathcal{S}}} P(\mathcal{S}'|s).
\end{align}
In the limit as the volume of $\mathcal{S}$ goes to zero, these bounds must converge to each other. Thus, the difference in probabilities within the volume, $\Delta P_{max}$,  must also go to zero.
\end{proof}

\subsubsection{Scalability}
A simpler method of covariance discretization could involve setting ranges on the individual elements of the covariance matrix, rather than relying on the spectral decomposition. While attractive in its simplicity, this alternative exacerbates an existing problem with discretization: state explosion. Consider an analogous definition of a region as some range of values over a single element of the covariance matrix. The full MDP state could then be described as a combination of regions across each element of the covariance matrix. Since the covariance matrix is symmetric, the number of regions, $N_R$, required to describe a state is
\begin{equation}
    N_R = \frac{n(n+1)}{2},
\end{equation}
where $n$ is the dimension of the state space. Whereas the spectral decomposition requires only a single region for each axis, so the number of regions required to describe a state is simply $n$.

Now assume that each matrix element is discretized into $d$ regions, and similarly that each axis in the spectral decomposition is also discretized into $d$ regions. The total number of combinations across all the regions is $d^{N_R}$. While this number explodes for both cases, it explodes less rapidly for the discretization based on the spectral decomposition. If we assume the same discretization at every waypoint, the total number of states is then $Nd^{N_R}+3$. While this is a worst case scenario (in practice, not every combination of regions is occupied), the described techniques do suffer from a state explosion problem, which is particularly troublesome as the dimensionality of the system increases.

The number of required regions for the matrix elements could be reduced by adding constraints inherent to the covariance, like positive definiteness and orthogonality. However, interpreting these constraints in matrix form is not straightforward, while they are very simple to implement in a spectral decomposition. The spectral decomposition also benefits from visual interpretability; it is easy to visualize the uncertainty of the system in terms of vectors defining a covariance ellipse. 

Off the shelf tools are available to solve the multi-objective problem on an MDP \cite{Kwiatkowska2011}. These use value iteration to construct a Pareto Front \cite{Forejt2012}, from which specific strategies are generated through linear programming \cite{Etessami2007}. The MDP analysis algorithms are polynomial with the size of the MDP.

For both methods, we use Monte Carlo sampling to approximate the transition probabilities \cite{Lahijanian2018}.

\section{EVALUATIONS}
    \label{sec:eval}
    
In this section we demonstrate the benefits of trade-off analysis by showing that small sacrifices in task performance can lead to large energy savings. We present Pareto Fronts for a 2D and 3D system generated using PRISM-games, \cite{Kwiatkowska2013}, and simulate specific strategies for interesting points. We simulated multiple points from each system and trajectory to compare the theoretical values for the objectives (obtained through analysis on the MDP) to the empirical results for the objectives (from 3000 simulations of the strategies). In all cases for the 2D system, we found the error between theory and simulation to be less than 2\%, and for the 3D system we found the error to be less than 3.5\%. 

\subsection{2D System}
We consider the system from \cite{Bry2011}, with dynamics:
\begin{align}
\label{eq:2Ddynamics}
    x_{k+1} &= x_k + u_k + w_k \\
    y_k &= x_k + v_k
\end{align}
with process noise $w_k \sim \mathcal{N}(0,0.07^2I)$ and measurement noise $v_k \sim \mathcal{N}(0,0.03^2I)$. For clarity, the communication cost is $c_m=1$ so that the energy is simply the raw number of triggers.

We first consider a winding trajectory through a set of obstacles, shown in Figure \ref{superGPP1}. For this trajectory, we built an MDP using both of the two abstraction methods described above. The resulting Pareto fronts are shown in Figure \ref{PPfrontCompMethods}. Note that due to the size of the MDP in the Discretized Belief Method, the Pareto Front is very complex, and Figure \ref{PPfrontCompMethods}.a. shows only a selection of interesting points. On the other hand, the simplicity of the MDP for the Enforced Belief Convergence Method allows for fast and easy computation of all vertices describing the full front, shown in Figure \ref{PPfrontCompMethods}.b..

\begin{figure}[ht]
    \centering
    \begin{subfigure}{0.23\textwidth}
    \centering
        \includegraphics[width=\textwidth]{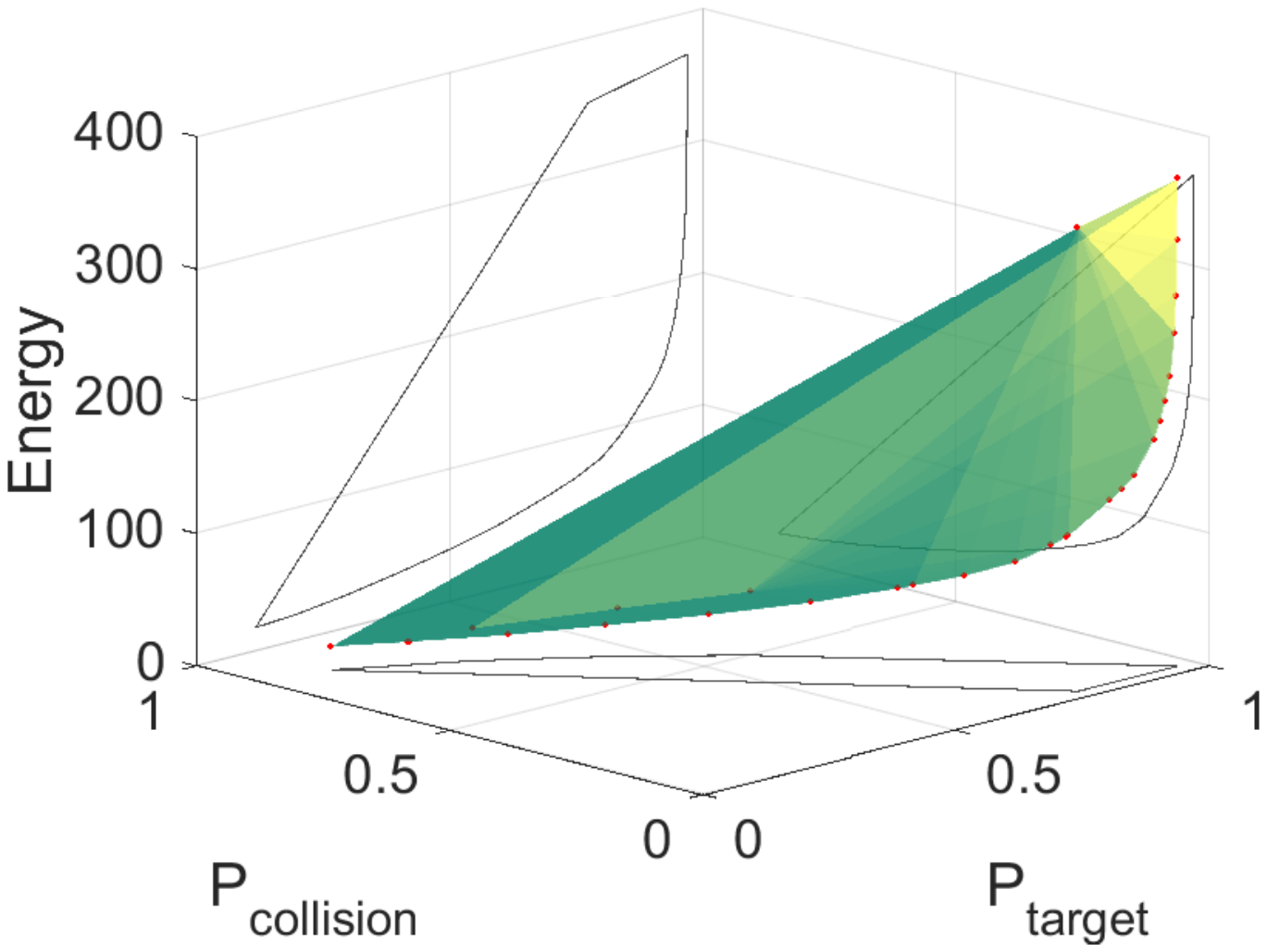}
        \caption{Discretized Belief Method}
    \end{subfigure}
    \hfill
    \begin{subfigure}{0.23\textwidth}
    \centering
        \includegraphics[width=\textwidth]{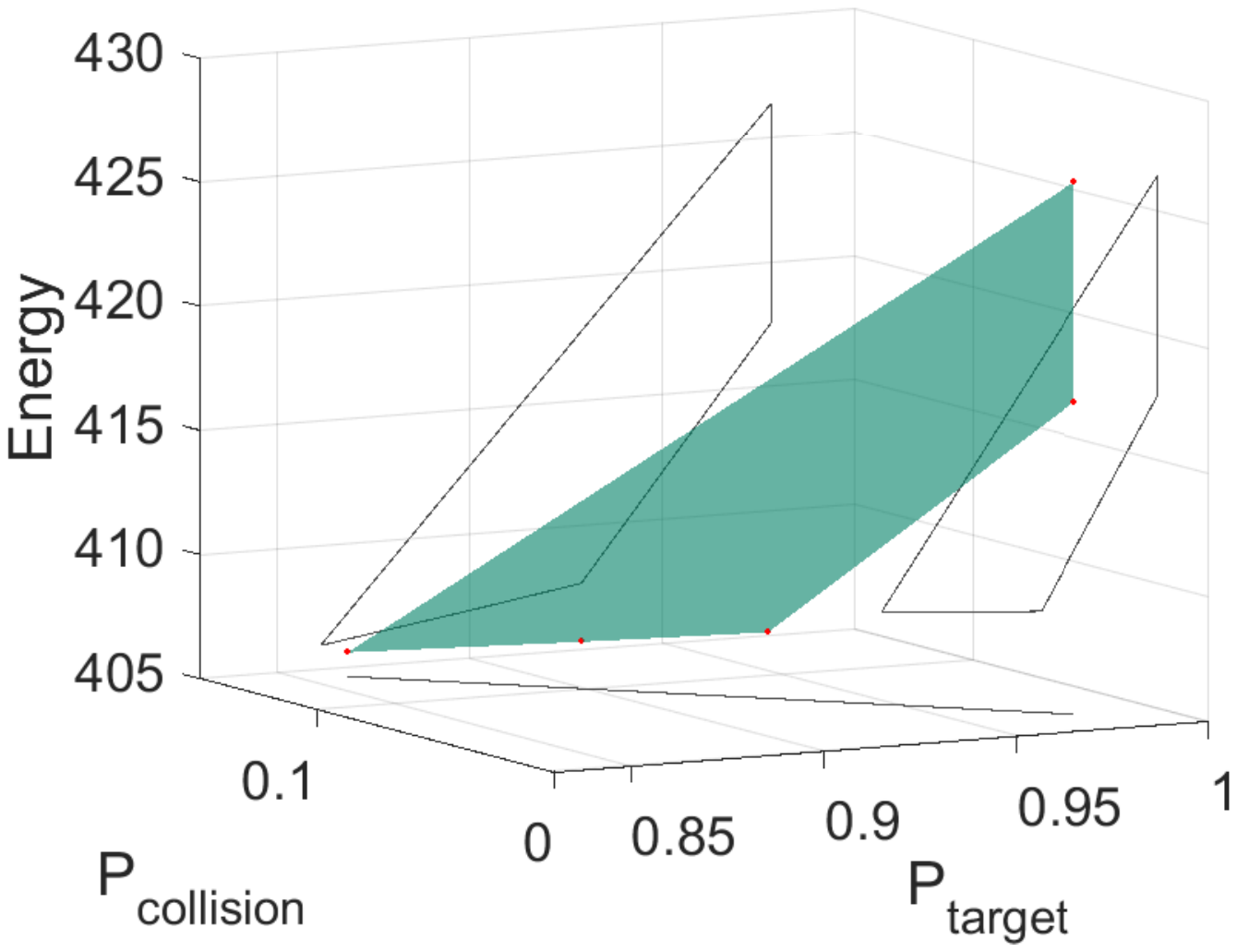}
        \caption{Enforced Belief Convergence Method}
    \end{subfigure}
    \caption{Pareto Fronts for 2D system with winding trajectory.}
    \label{PPfrontCompMethods}
\end{figure}

Each front shows a characteristic shape; from the highest probability of target there is a steep drop-off in energy which eases as the probability of target gets lower. This indicates relatively large energy savings for the highest probability of target, with smaller gains as the probability of target goes to its minimum. This shape is common across every Pareto Front generated in these studies, and supports the argument that this kind of analysis can provide valuable energy savings for little sacrifice in performance objectives. 

The second note from these two fronts is the contrast between the two methods. At the highest energy consumption, both methods offer similarly high probability of reaching the target ($98.8\%$ for the KF Convergence Method vs. $96.8\%$ for the Discretized Belief Method). However, the discretized belief method drops off to far lower values than the KF convergence method, indicating that method is better able to leverage ET and provide greater energy efficiency. 

\begin{table}[h!]

    \begin{center}
    \caption{3 Pareto Points for winding trajectory using the Discretized Belief Method}
    \label{tab:2Dwinding}
    \begin{tabular}{|| c | c | c | c ||}
    \hline
    Pareto Point & $P_{tar}$ & $P_{col}$ & $E_c$ \\
    \hline
         1 & 96.85\% & 3.14\% & 367.99 \\
         2 & 95\% & 4.70\% & 185.07 \\
         3 & 83.5\% & 14.68\% & 92.35 \\
         Full KF & 95.67\% & 3.97\% & 682.32\\
    \hline
    \end{tabular}
    \end{center}
\end{table}

To more closely examine interesting trade-offs for the Discretized Belief Method, consider Table \ref{tab:2Dwinding}, which shows 3 selected Pareto Points of interest in addition to expected performance if a regular KF is used at every time step. Pareto Point 1 is the point on the front with the highest probability of reaching the target. Pareto point 2 sacrifices only $1.8\%$ probability of target, but uses half the energy. However, Pareto Point 3 represents another halving of the energy, but the probability of target drops by $11.5\%$: the strength of the trade-off decreases as the energies become lower. Also note the stark comparison with the full KF performance; at its highest probability of reaching the target, ET achieves the similar performance for $54\%$ the energy use. The behaviors of the three Pareto Points can be seen in the simulation of their corresponding strategies, shown in Figure \ref{TrajectoryComp}.

\begin{figure}[b]
     \centering
     \begin{subfigure}[b]{0.13\textwidth}
         \centering
         \includegraphics[width=\textwidth]{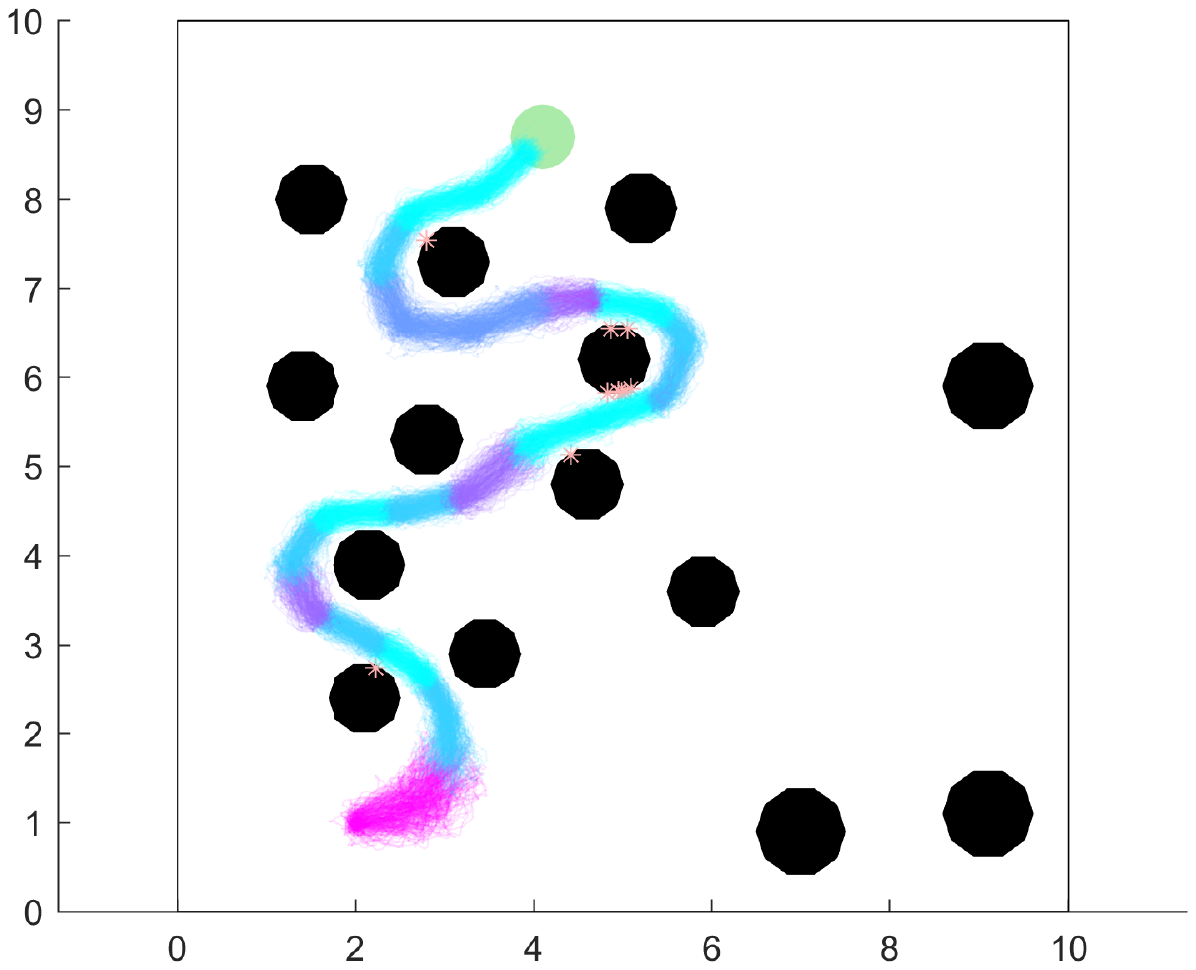}
         \caption{Pareto Point 1}
         \label{superGPP1}
     \end{subfigure}
     \hfill
     \begin{subfigure}[b]{0.13\textwidth}
         \centering
         \includegraphics[width=\textwidth]{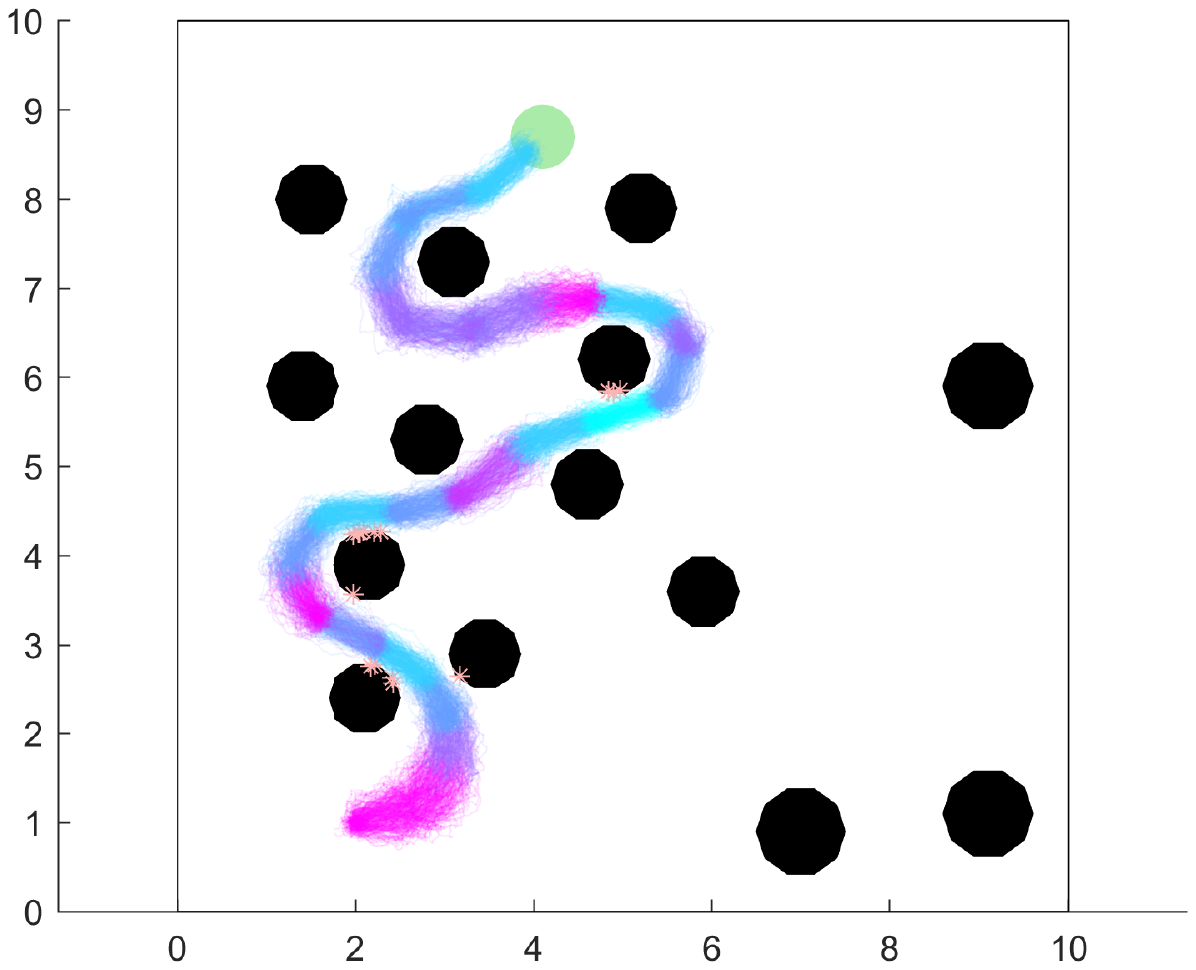}
         \caption{Pareto Point 2}
         \label{superGPP2}
     \end{subfigure}
     \hfill
     \begin{subfigure}[b]{0.13\textwidth}
         \centering
         \includegraphics[width=\textwidth]{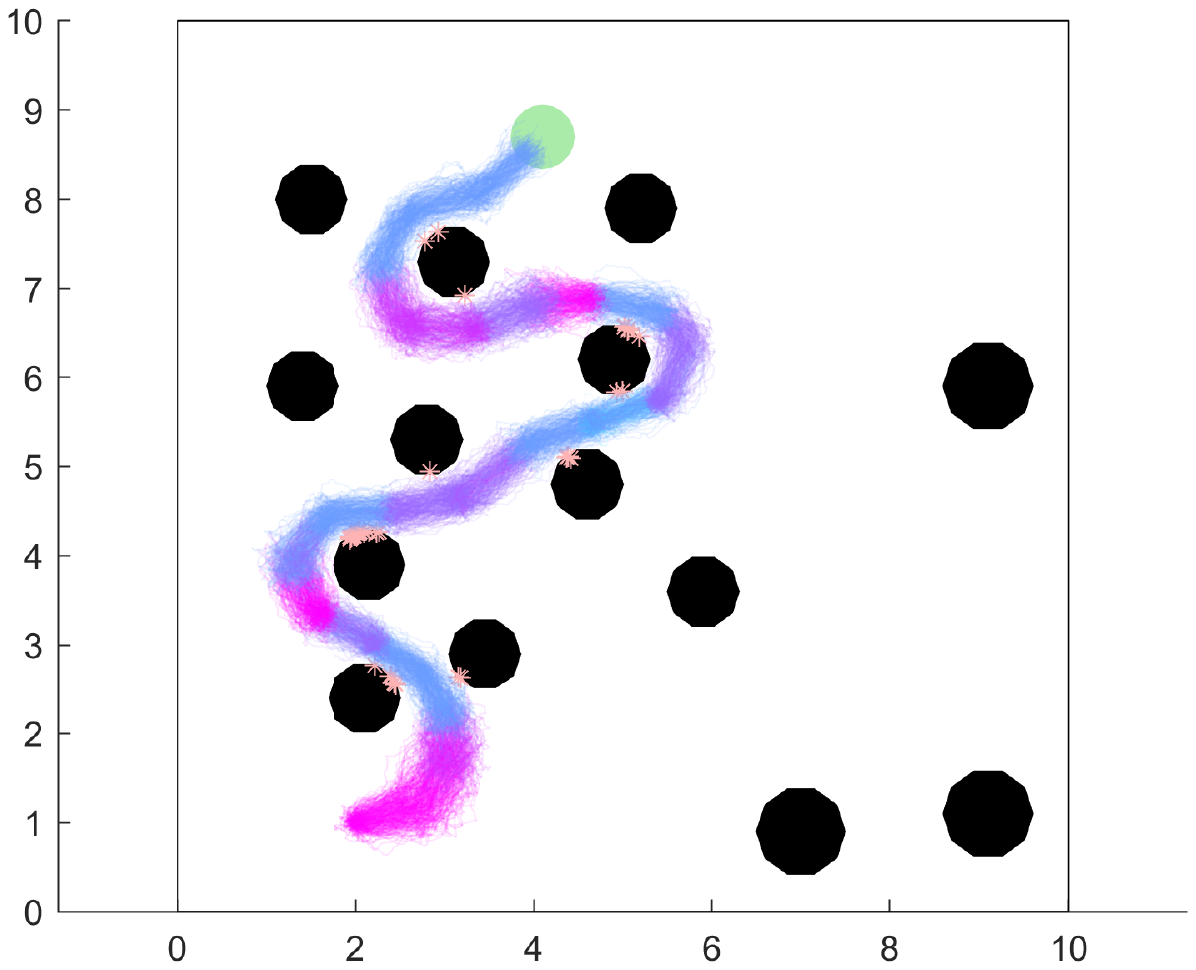}
         \caption{Pareto Point 3}
         \label{superGPP3}
     \end{subfigure}
          \hfill
     \begin{subfigure}[b]{0.07\textwidth}
         \centering
         \includegraphics[width=\textwidth]{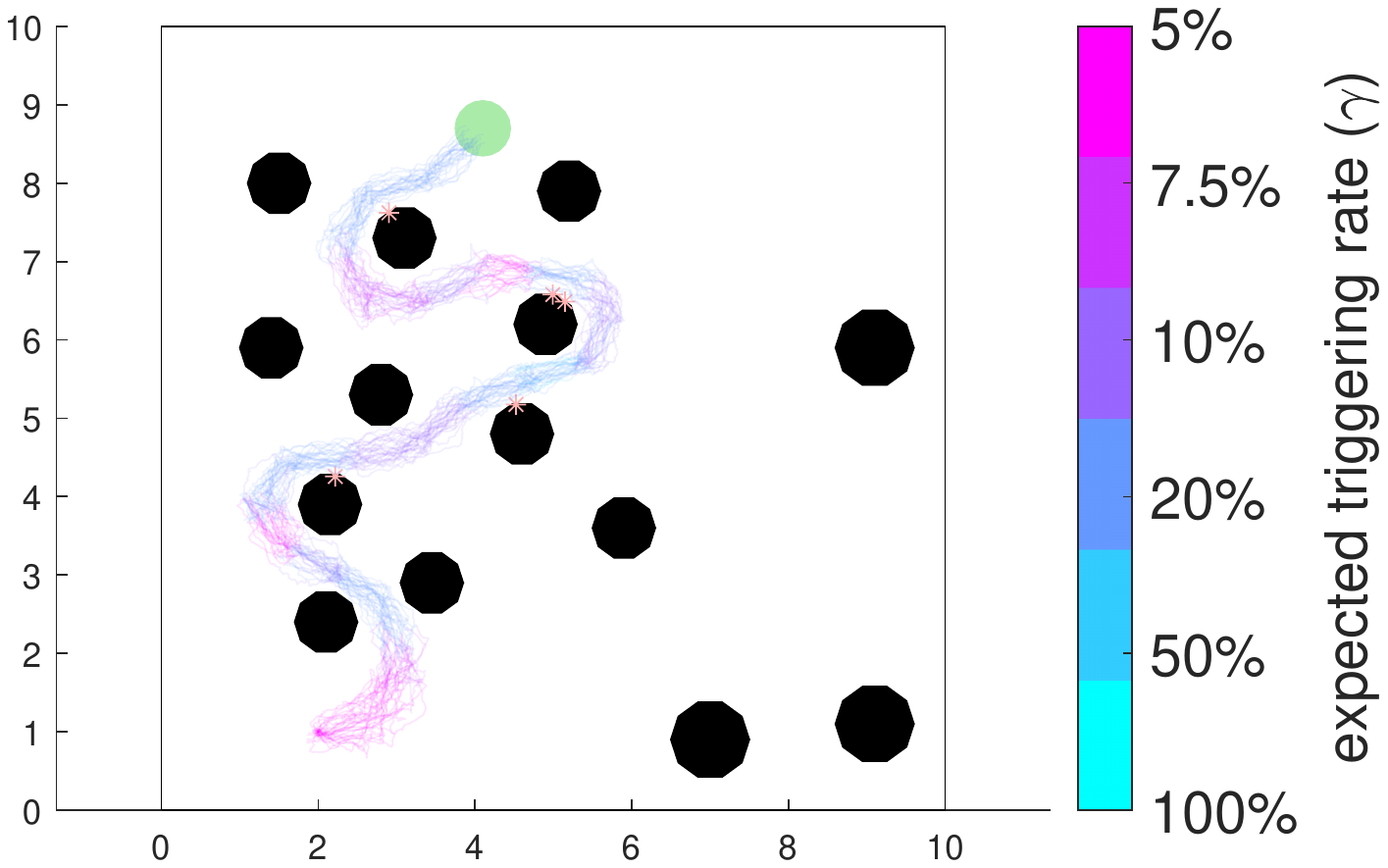}
         \label{superGPP3}
     \end{subfigure}
 
    \caption{Simulated strategies of three Pareto points for a 2D system using the Discretized Belief method. The color of the trajectory indicates $\bar{\gamma}(\delta)$ 
    for the $\delta$ chosen at each waypoint.}
    \label{TrajectoryComp}
\end{figure}

Figure \ref{TrajectoryComp} shows 300 simulated runs of each strategy, where the color of the trajectory indicates the triggering threshold. Each threshold can be directly mapped to an expected triggering rate (see \cite{Wu2013}). In the figure, cyan corresponds the highest expected triggering rate (100\%) and magenta the lowest (5\%). The efficiency of the strategies can be seen in how the triggering rate only increases (lower $\delta$) in proximity to obstacles, and otherwise relaxes (higher $\delta$) in more open parts of the trajectory. Higher triggering rates lead to a contraction in the trajectories, and therefore lower probabilities of collision. Trivially, we can see that the lower energy Pareto points correspond to lower triggering rates around the obstacles.

\begin{figure}
     \centering
     \begin{subfigure}[b]{0.3\textwidth}
         \centering
         \includegraphics[width=\textwidth]{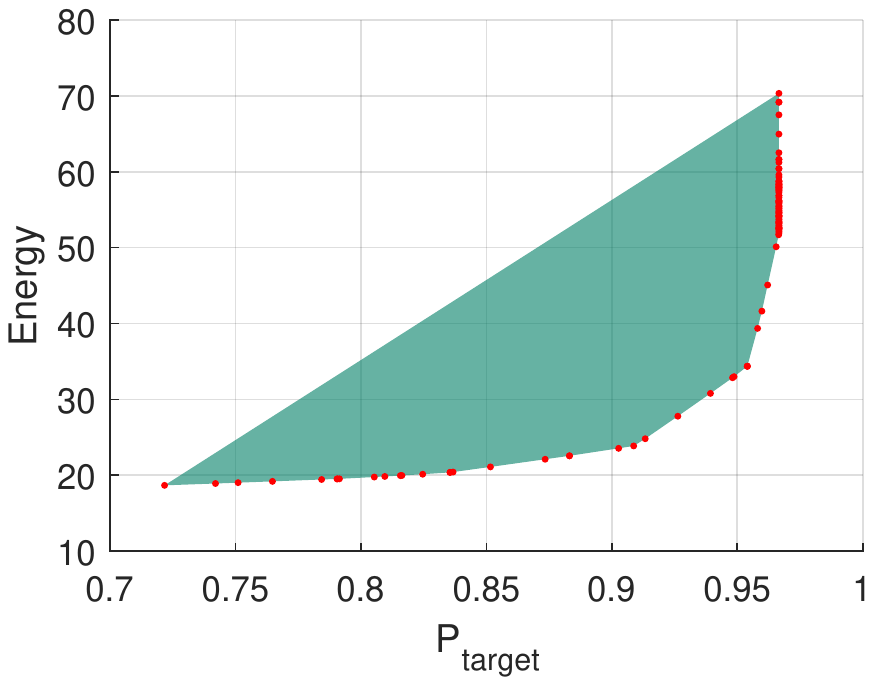}
         \caption{Pareto Front}
         \label{openPfront}
     \end{subfigure}
     \hfill
     \begin{subfigure}[b]{0.2\textwidth}
         \centering
         \includegraphics[width=\textwidth]{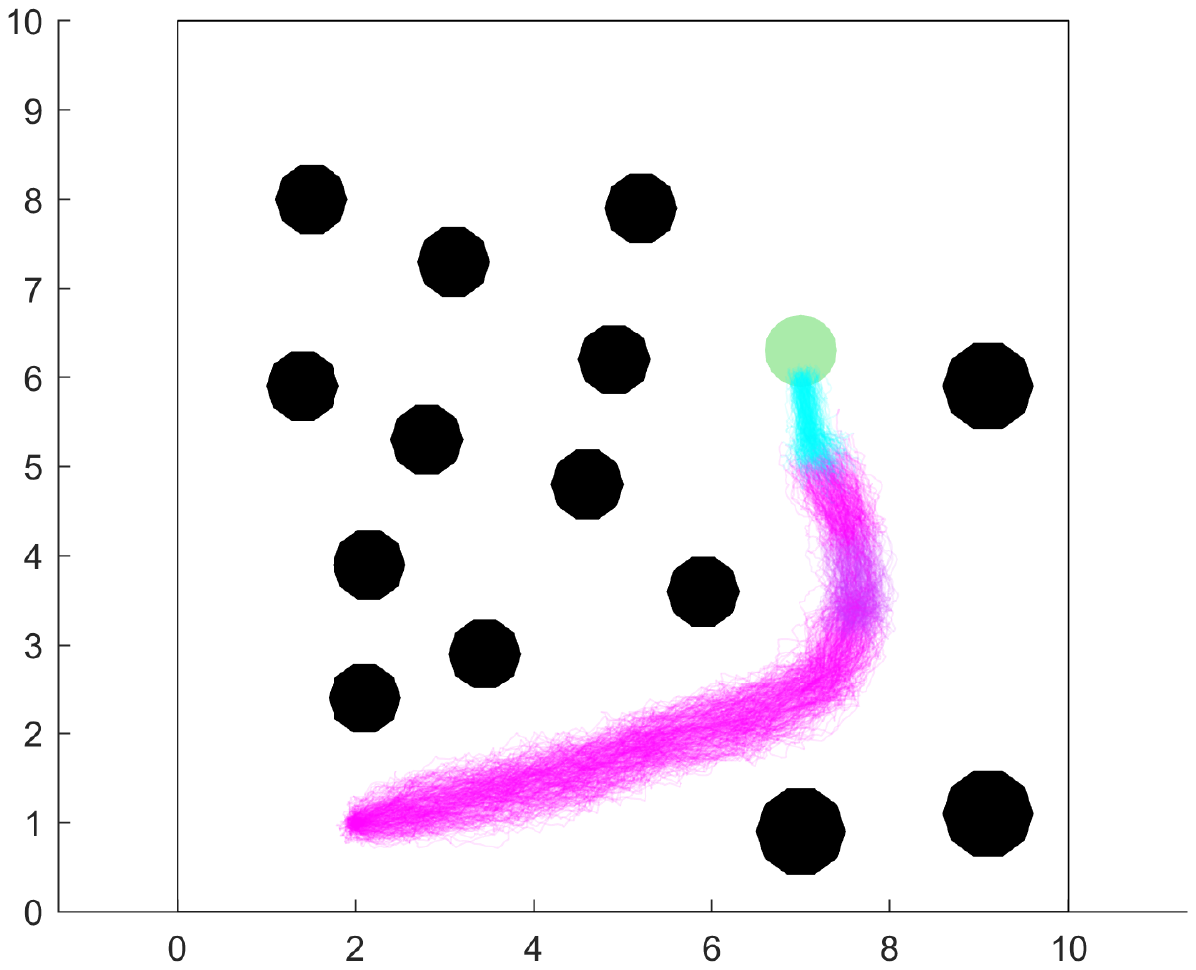}
         \caption{Pareto Point 1}
         \label{openPP1}
     \end{subfigure}
     \hfill
     \begin{subfigure}[b]{0.2\textwidth}
         \centering
         \includegraphics[width=\textwidth]{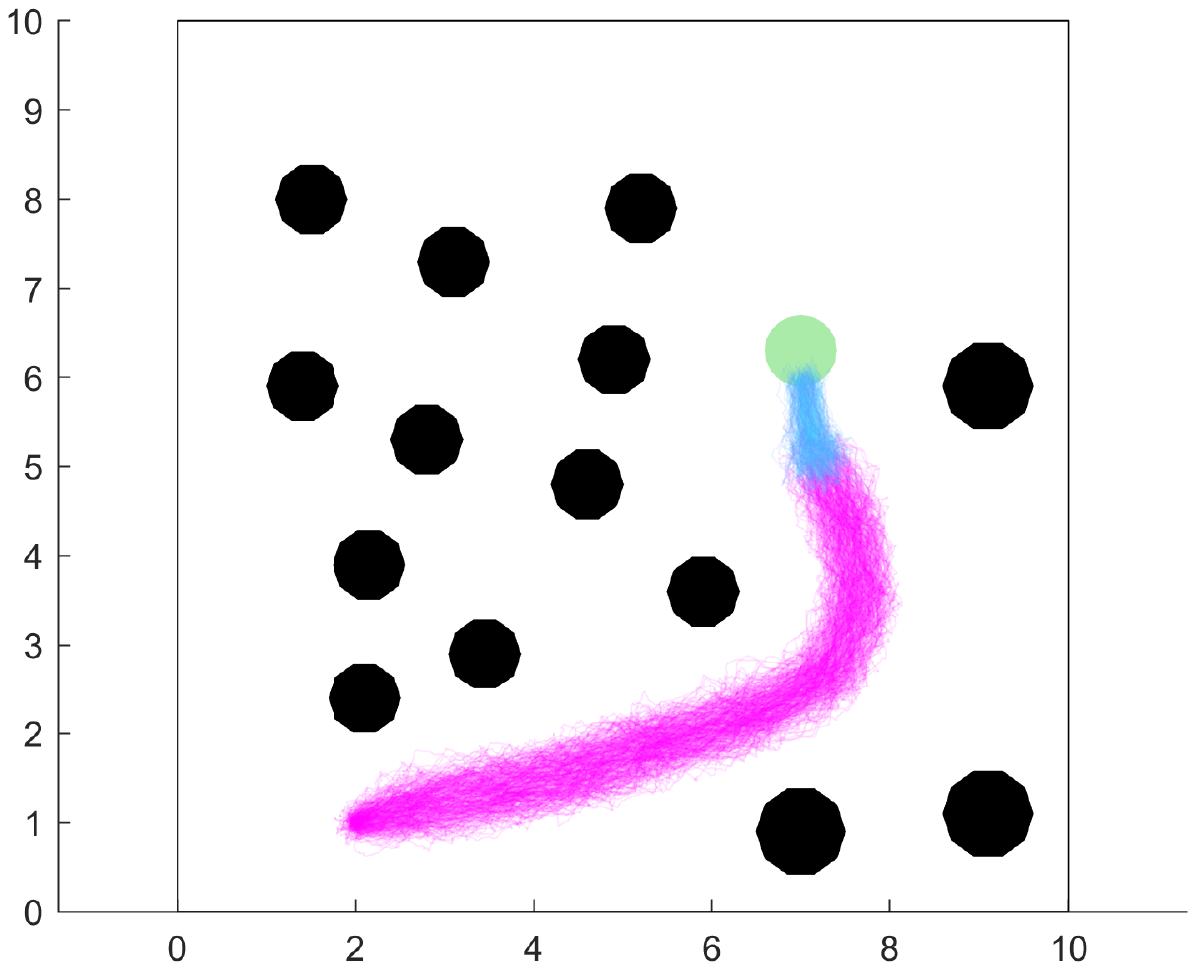}
         \caption{Pareto Point 2}
         \label{openPP2}
     \end{subfigure}
     \hfill
     \begin{subfigure}[b]{0.07\textwidth}
         \centering
         \includegraphics[width=\textwidth]{figures/colorbar.pdf}
     \end{subfigure}
        \caption{Pareto front and 2 simulated strategies for 2D System, open trajectory, using discretized belief method.}
        \label{openTrajectory}
\end{figure}

The Pareto front is specific to the trajectory chosen, and different trajectories yield different trade-offs. To demonstrate this, we consider a more open trajectory for the same 2D system with the Discretized Belief Method. This trajectory does not collide with any obstacles. The resulting Pareto Front and trajectories for two simulated points are shown in Figure \ref{openTrajectory}. Performance of the objectives for the two simulated Pareto points is shown in Table \ref{tab:2Dopen}.
\begin{table}[h!]
\begin{center}
    \caption{Two Pareto Points for open trajectory with Discretized Belief Method}
    \label{tab:2Dopen}
    \begin{tabular}{|| c | c | c | c ||}
    \hline
    Pareto Point & $P_{tar}$ & $P_{col}$ & $E_c$ \\
    \hline
         1 & 97.66\% & 0.0 & 52.58 \\
         2 & 95\% & 0.0 & 30.95\\
         Full KF & 97.43\% & 0.0 & 332.82\\
    \hline
    \end{tabular}
\end{center}
\end{table}

Note that because this trajectory avoids proximity with obstacles, the only difference in the Pareto Points comes at the very end of the trajectory going into the target region. Because there is really only one waypoint with a relevant action choice, the trade-offs between Pareto points are far less dramatic than the first trajectory, however the contrast with a using a full KF at every time step is amplified.

\subsection{3D System}
Finally, to demonstrate the flexibility of our abstraction method, we consider the same system as (\ref{eq:2Ddynamics}), but extended to three dimensions. A selection of interesting Pareto Points and a single simulated strategy for a winding trajectory using the Discretized Belief Method is shown in Figure \ref{3DTrajectory}. Note that the same general shape of the Pareto front holds for the 3D system, with generous trade-offs at higher energies that taper as the energy decreases. The simulated trajectory corresponds to the highest probability of target and shows similar behavior to the 2D system, with higher triggering rates closer to the obstacles.

\begin{figure}[t]
     \centering
     \begin{subfigure}[b]{0.23\textwidth}
         \centering
         \includegraphics[width=\textwidth]{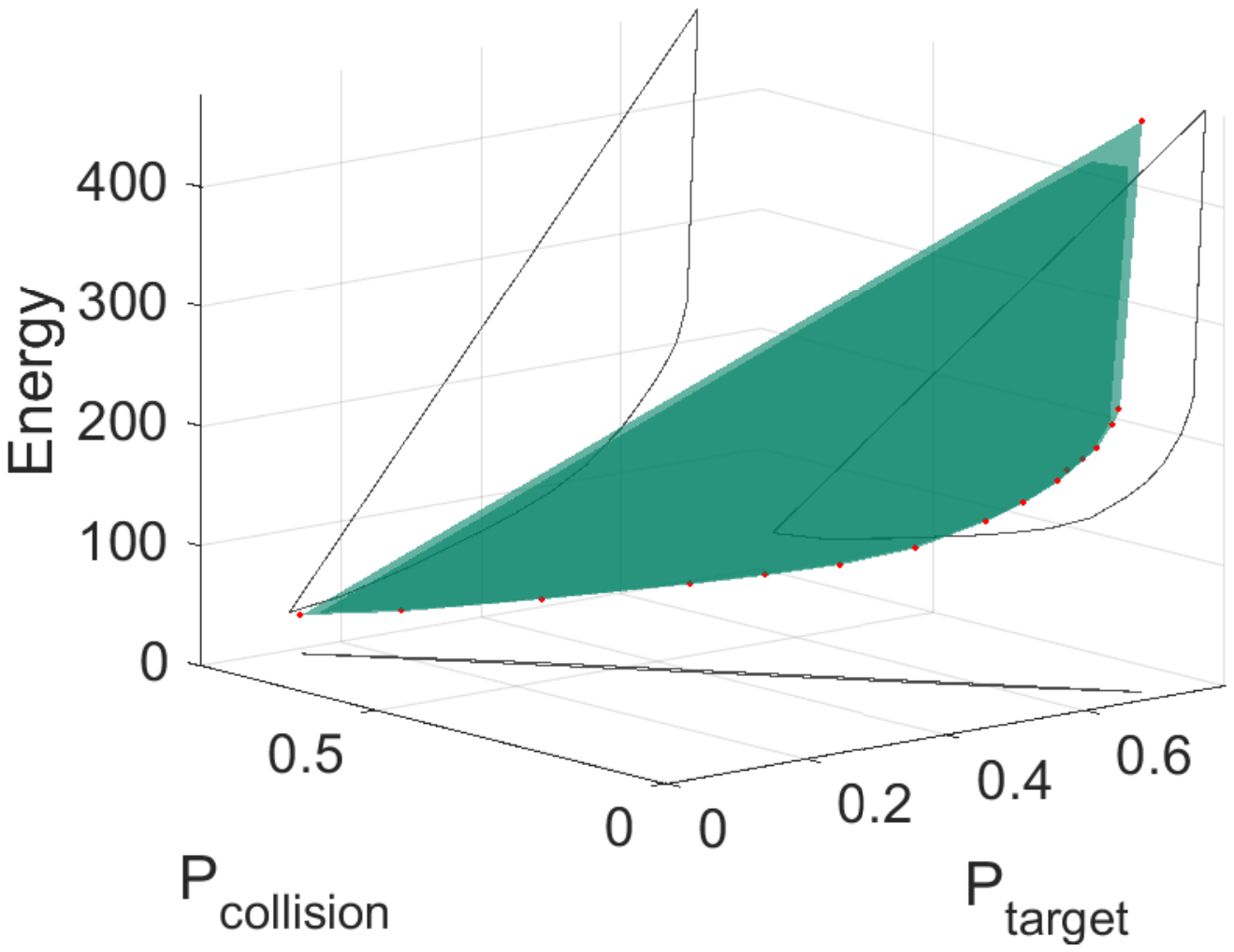}
         \caption{Pareto Front}
         \label{3DPfront}
     \end{subfigure}
     \hfill
     \begin{subfigure}[b]{0.23\textwidth}
         \centering
         \includegraphics[width=\textwidth]{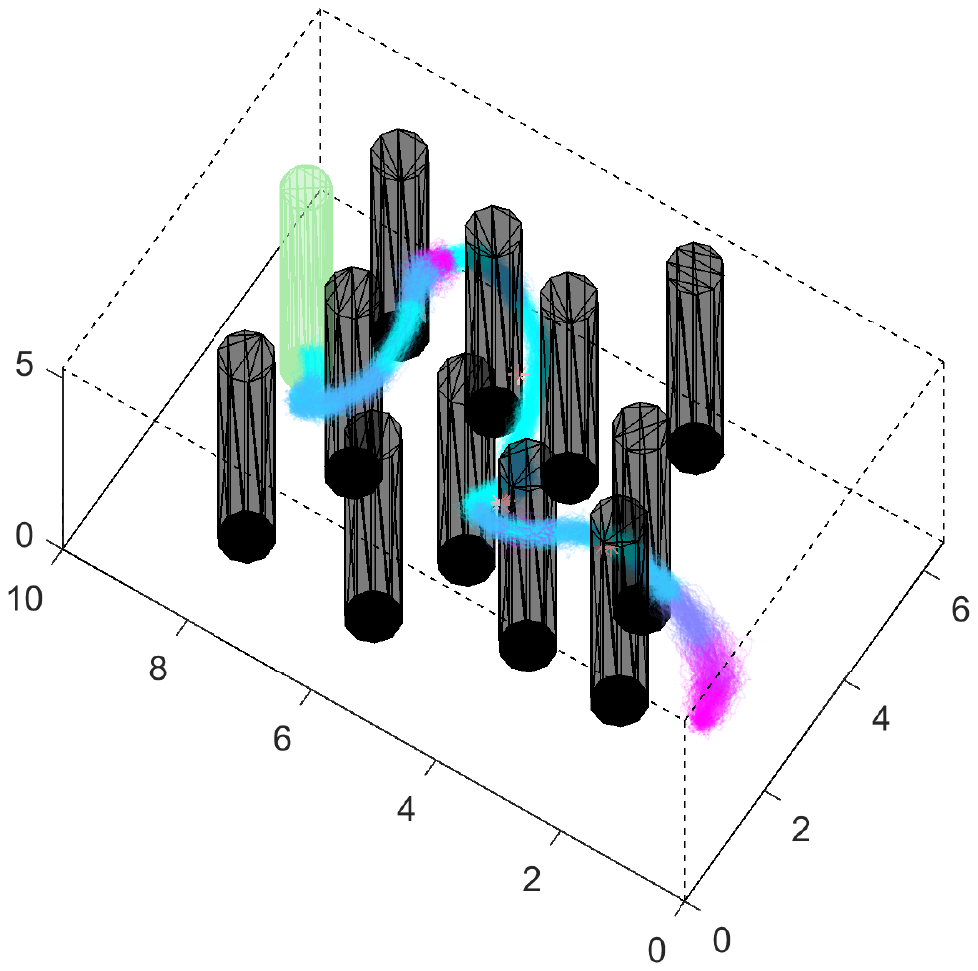}
         \caption{Pareto Point 1}
         \label{3DPP1}
     \end{subfigure}

        \caption{(a) Pareto front and simulated strategy for 3D System; (b) `Super G' trajectory, using discretized belief method. } 
        \label{3DTrajectory}
\end{figure}


\section{CONCLUSION}
    \label{sec:conclusion}



In this work, we introduced a method of calculating optimal trade-offs between communication cost of a distributed sensor network via ET, and task performance of an active agent. Our method relies on a novel geometric discretization using the spectral decomposition of a covariance matrix to group belief states of similar size and orientation. This technique allows us to fully leverage the benefits offered by ET. A limitation of this work is scalability. A future research direction is to employ adaptive discretization techniques to gain efficiency, extending the method to higher dimensions.

\bibliographystyle{IEEEtran}
\bibliography{refs}


\end{document}